\documentclass[sn-mathphys-ay]{sn-jnl}

\usepackage{graphicx}%
\usepackage{multirow}%
\usepackage{amsmath,amssymb,amsfonts}%
\usepackage{amsthm}%
\usepackage{mathrsfs}%
\usepackage[title]{appendix}%
\usepackage{xcolor}%
\usepackage{textcomp}%
\usepackage{manyfoot}%
\usepackage{soul}
\usepackage{booktabs}%
\usepackage{algorithm}%
\usepackage{algorithmicx}%
\usepackage{algpseudocode}%
\usepackage{listings}%
\usepackage{lineno,hyperref}
\usepackage{color}
\usepackage{calc}
\usepackage{mathpazo}
\usepackage{graphicx}
\usepackage{array}
\usepackage{caption}
\usepackage{subcaption}
\usepackage{mathtools}
\usepackage{pdflscape}
\usepackage{bm}

\newtheorem{theorem}{Theorem}
\usepackage{tikz}
\usetikzlibrary{arrows.meta, positioning}

\begin{document}

\title[Article Title]{Network-Based Multi-Layer Model Using Machine Learning for Optimal Vaccine Prioritization in Heterogeneous Populations}

\author[1]{\fnm{Mordecai} \sur{Opoku Ohemeng}}\email{mordecai@uidaho.edu}

\author[2]{\fnm{Bernard} \sur{Asamoah Afful}}\email{bernard.afful@usu.edu}

\affil[1]{\orgdiv{Department of Computer Science}, \orgname{University of Idaho}, \orgaddress{\city{Moscow}, \state{Idaho}, \country{USA}}}

\affil[2]{\orgdiv{Department of Mathematics and Statistics}, \orgname{Utah State University}, \orgaddress{\city{Logan}, \state{Utah}, \country{USA}}}

\abstract{ This work advances epidemic control beyond traditional mass vaccination models by integrating population heterogeneity, network structure, and machine learning–based decision policies. Using the Email-Eu-Core contact network \cite{yin2017local}, we compare classical centrality-driven vaccination strategies with graph neural network (GNN) and reinforcement learning (RL) approaches. Across 30 stochastic simulations, classical heuristics (degree-, betweenness-, and layer-based vaccination) exhibit similar performance, reflecting the network’s dense connectivity and modest community structure. In contrast, the GNN-based strategy substantially reduces peak infection, final epidemic size, and time to peak, demonstrating its ability to identify structurally critical nodes that classical metrics overlook. These results show that learning-based vaccination policies can significantly outperform traditional heuristics by exploiting higher-order relational patterns in real-world networks, offering a powerful framework for targeted epidemic intervention.}

\keywords{Network Epidemiology; Graph Neural Networks; Reinforcement Learning; Vaccination Strategies; Epidemic Control}

\maketitle


\section{Introduction}

The classical SIR and SIRV epidemic models assume homogeneous mixing \citep{bansal2007individual, renna2021homogenous, turnes2014epidemic, sow2026generalized}, in which all individuals interact with equal probability. This assumption is unrealistic for real-world populations, where contact patterns are highly heterogeneous. Individuals differ in their number of contacts, their role in the community, and their likelihood of transmitting infection. High-contact individuals such as healthcare workers, essential workers, and students contribute disproportionately to disease transmission. Network-based epidemic models overcome this limitation by explicitly representing individuals as nodes and interactions as edges \citep{keeling2005networks, gomez2023new, choquet2025network, yang2012epidemic, mckee2024structural, liu2025higher}. This allows vaccination strategies to target structurally important individuals, those with a high degree, high betweenness centrality, or critical positions in multi-layer networks. Prioritizing such individuals accelerates herd immunity and reduces transmission more efficiently than uniform vaccination.

Recent literature has explored various dimensions of vaccination dynamics to refine these interventions. For instance, \cite{xue2026vaccination} developed a vaccination-transmission coupled mechanism based on a parallel minority game to model the decision-making conflicts of individuals. While such models provide insight into human behavior, they often utilize simplified topological structures that do not fully capture the complexity of real-world interactions. Similarly, \cite{yunus2026fractional} introduced fractional-order modeling of vaccination strategies for measles, effectively incorporating immune memory into the transmission dynamics. However, these compartmental approaches typically rely on homogeneous mixing assumptions, overlooking the localized structural vulnerabilities of the population.

The integration of network topology with individual behavior has been further advanced by studies exploring the social drivers of vaccination. In the work by \cite{chang2019effects}, they investigated the effects of imitation dynamics on vaccination behaviors within an SIR-network model, highlighting how social learning and the imitation of successful neighbors can lead to non-trivial vaccination uptake patterns. While this highlights the importance of peer influence, such models often assume that individuals act based on local social signals rather than global structural optimization. Further, \cite{kumar2024game} utilized a game-theoretic complex network model to estimate epidemic thresholds under adaptive social connections, focusing on the fluidity of network edges. Similarly, \cite{meng2021analysis} investigated vaccination strategies using a combined SEIRV and evolutionary game framework on heterogeneous networks. Despite these advancements, a significant gap remains: existing models often treat the network as a single-layer entity or rely on local, first-order heuristics such as degree centrality or local imitation, which fail to capture the higher-order, cross-layer dependencies inherent in modern social structures.

In this work, we bridge these gaps by developing a multi-layer network SIRV model that incorporates heterogeneous transmission across distinct social layers, such as households, workplaces, and community interactions. Unlike the behavior-focused games of \cite{xue2026vaccination} or the imitation-based dynamics described by \cite{chang2019effects}, our framework explicitly addresses the multi-layered connectivity and mathematical stability of real-world networks. We further introduce a machine-learning-assisted vaccination framework that leverages graph neural networks (GNNs) to learn structural risk patterns directly from the topology. Finally, we formulate vaccine allocation as an optimal control problem, providing a mathematically rigorous method for dynamic prioritization that adapts to the epidemic's progression.

Our contributions are as follows:
\begin{itemize}
    \item We construct a multi-layer network SIRV model with heterogeneous transmission across layers, enabling simulation of epidemic dynamics on structured populations.
    
    \item We introduce machine-learning-based vaccination prioritization using graph neural networks and node embeddings, and demonstrate empirically that the GNN-based strategy consistently outperforms classical heuristics by achieving the lowest peak infection, smallest final epidemic size, and earliest epidemic suppression.
    
    \item We formulate optimal vaccination as a hybrid control–learning problem, showing that learning-based policies can exploit higher-order structural patterns that are not captured by degree-, betweenness-, or layer-based heuristics.
    
    \item We compute the basic reproduction number $R_0$ and the time-varying effective reproduction number $R_t$, and use these quantities to interpret why classical strategies cluster together while the GNN-based strategy achieves superior suppression.
    
    \item We perform extensive simulations comparing classical and ML-assisted prioritization strategies on a real-world contact network, revealing that classical heuristics exhibit similar performance due to network density and weak modularity, while the GNN-based strategy provides substantial improvements across all epidemiological metrics.
\end{itemize}

\section{Multi-Layer Network SIRV Model}

We consider a population represented by a multi-layer network with $L$ layers. Each layer $\ell$ has an adjacency matrix $A^{(\ell)} = (A^{(\ell)}_{ij})$ and transmission rate $\beta^{(\ell)}$. For each individual $i$, the SIRV dynamics are:

\begin{equation}
    \frac{d S_i}{dt} = - S_i \sum_{\ell=1}^{L} \sum_{j=1}^{N} \beta^{(\ell)} A^{(\ell)}_{ij} I_j - v_i(t) S_i,
\end{equation}

\begin{equation}
    \frac{d I_i}{dt} = S_i \sum_{\ell=1}^{L} \sum_{j=1}^{N} \beta^{(\ell)} A^{(\ell)}_{ij} I_j - \gamma I_i,
\end{equation}

\begin{equation}
    \frac{d R_i}{dt} = \gamma I_i,
\end{equation}

\begin{equation}
    \frac{d V_i}{dt} = v_i(t) S_i,
\end{equation}

where $v_i(t)$ is the vaccination rate (control variable), and $\gamma$ is the recovery rate.


We evaluate three classical vaccination strategies: high-degree, betweenness centrality, and layer-specific prioritization. We introduce a machine learning-based strategy that leverages graph neural networks (GNNs) to learn structural risk patterns directly from the multi-layer contact network. All strategies operate under a fixed vaccination budget $B$, and select the top-$B$ nodes based on their prioritization scores.

\subsection{High-Degree Prioritization}

High-degree vaccination targets nodes with the largest number of contacts across all layers. For a multi-layer network with adjacency matrices $\{A^{(\ell)}\}_{\ell=1}^{L}$, the degree of node $i$ is defined as

\begin{equation}
    \deg(i) = \sum_{\ell=1}^{L} \sum_{j=1}^{N} A^{(\ell)}_{ij}.
\end{equation}
Nodes are ranked in descending order of $\deg(i)$, and the top-$B$ nodes are vaccinated. This heuristic captures local connectivity but does not account for community structure, bridging roles, or multi-hop influence.

\subsubsection{Betweenness Centrality Prioritization}

Betweenness centrality measures the extent to which a node lies on shortest paths between other nodes. Nodes with high betweenness act as structural bottlenecks and are critical for inter-community transmission. For node $i$, betweenness is defined as
\begin{equation}
    \mathrm{BC}(i) = \sum_{s \neq i \neq t} \frac{\sigma_{st}(i)}{\sigma_{st}},
\end{equation}
where $\sigma_{st}$ is the number of shortest paths between $s$ and $t$, and $\sigma_{st}(i)$ is the number of such paths passing through $i$. Nodes are ranked by $\mathrm{BC}(i)$, and the top-$B$ nodes are vaccinated. Because betweenness is highly sensitive to network fragmentation, we scale the effective budget for this strategy to avoid over-fragmentation, ensuring meaningful epidemic dynamics.

\subsubsection{Layer-Specific Prioritization}

In multi-layer networks, different layers may represent communication channels, social contexts, or functional subsystems. Layer-specific prioritization assigns a weight $w_\ell$ to each layer and computes a weighted degree:
\begin{equation}
    s_i = \sum_{\ell=1}^{L} w_\ell \sum_{j=1}^{N} A^{(\ell)}_{ij}.
\end{equation}
Nodes with the highest $s_i$ are vaccinated. This approach captures heterogeneity across layers and can emphasize layers with higher transmission rates or greater epidemiological relevance. In our experiments, layer weights are derived from community structure using modularity-based clustering.








\subsection{Reproduction Number}

Before vaccination, the basic reproduction number is:
\[
R_0 = \rho\left( \sum_{\ell=1}^{L} \beta^{(\ell)} A^{(\ell)} \right) \frac{1}{\gamma},
\]
where $\rho(\cdot)$ is the spectral radius.

The effective reproduction number is:
\[
R_t = R_0 \cdot \frac{1}{N} \sum_{i=1}^{N} S_i(t).
\]



\subsection{Stability Analysis}
We analyze the stability of the disease-free equilibrium (DFE) and characterize the bifurcation behavior of the multi-layer SIRV network model. Let the DFE be defined as
\begin{equation}
    (S_i^*, I_i^*, R_i^*, V_i^*) = (1, 0, 0, 0), \qquad i = 1,\dots,N,
\end{equation}
corresponding to a fully susceptible population with no infection present.


Let
\begin{equation}
    \Lambda_i = \sum_{\ell=1}^{L} \sum_{j=1}^{N} \beta^{(\ell)} A^{(\ell)}_{ij},
\end{equation}
denote the total infection pressure on node $i$ across all layers.

Linearizing the system around the DFE yields the Jacobian block for node $i$:
\begin{equation}
    J_i =
\begin{bmatrix}
 -v_i & -\Lambda_i & 0 & -1 \\
 0 & \Lambda_i - \gamma & 0 & 0 \\
 0 & \gamma & 0 & 0 \\
 v_i & 0 & 0 & 0
\end{bmatrix}.
\end{equation}

The infection dynamics are governed by the block corresponding to $I_i$:
\begin{equation}
    \frac{d I_i}{dt} \approx (\Lambda_i - \gamma) I_i.
\end{equation}

Thus, the dominant eigenvalue associated with infection spread is
\begin{equation}
    \lambda_i = \Lambda_i - \gamma.
\end{equation}

The DFE is locally asymptotically stable if and only if
\begin{equation}
    \Lambda_i < \gamma \quad \text{for all } i.
\end{equation}

\subsubsection{Network-Level Stability Condition}

Define the multi-layer transmission matrix
\begin{equation}
    B = \sum_{\ell=1}^{L} \beta^{(\ell)} A^{(\ell)}.
\end{equation}

The linearized infection subsystem is
\begin{equation}
    \frac{d\mathbf{I}}{dt} = (B - \gamma I_N)\mathbf{I},
\end{equation}
where $I_N$ is the identity matrix.

The DFE is stable if the spectral radius satisfies
\begin{equation}
    \rho(B) < \gamma.
\end{equation}

This yields the basic reproduction number
\begin{equation}
    R_0 = \frac{\rho(B)}{\gamma}.
\end{equation}

Thus:
\begin{equation}
    R_0 < 1 \quad \Rightarrow \quad \text{DFE stable}, \qquad
R_0 > 1 \quad \Rightarrow \quad \text{DFE unstable}.
\end{equation}

\subsection{Sensitivity Analysis of $R_0$ with Respect to Network Structure}

Since
\begin{equation}
    R_0 = \frac{\beta \rho(A)}{\gamma},
\end{equation}
the epidemic threshold is determined by the spectral radius of the adjacency matrix. We analyze how perturbations to the network structure affect $\rho(A)$.

Let
\begin{equation}
    A' = A + \Delta A,
\end{equation}
where $\Delta A$ represents small structural changes (edge addition, removal, rewiring).

Using first-order eigenvalue perturbation theory:
\begin{equation}
    \Delta \rho(A) \approx \frac{v^\top (\Delta A) v}{v^\top v},
\end{equation}
where $v$ is the dominant eigenvector of $A$.


Adding an undirected edge between nodes $i$ and $j$ corresponds to the rank-one perturbation
\begin{equation}
    \Delta A = e_i e_j^\top + e_j e_i^\top.
\end{equation}

Thus,
\begin{equation}
    \Delta \rho(A) \approx 2 v_i v_j.
\end{equation}

If $i$ and $j$ are central nodes (large $v_i, v_j$), $\rho(A)$ increases sharply, and if $i$ and $j$ are peripheral nodes, the increase is negligible.

Adding $k$ random edges,
\begin{equation}
    \Delta A = \sum_{l=1}^{k} (e_{i_l} e_{j_l}^\top + e_{j_l} e_{i_l}^\top),
\end{equation}

\begin{equation}
    \Delta \rho(A) \approx 2 \sum_{l=1}^{k} v_{i_l} v_{j_l}.
\end{equation}

Since random nodes have small eigenvector components, the increase in $\rho(A)$ is slow.


Removing an edge between nodes $i$ and $j$ yields, 
\begin{equation}
    \Delta \rho(A) \approx -2 v_i v_j.
\end{equation}

Thus, removing edges between central nodes dramatically reduces $\rho(A)$, and random edge removal has minimal effect.


Removing a high-degree node reduces $\rho(A)$ approximately by
\begin{equation}
    \Delta \rho(A) \approx -\frac{k_{\max}}{\sqrt{k_{\text{avg}}}},
\end{equation}
but secondary paths may still exist.

Removing high-betweenness nodes fragments the network, causing
\begin{equation}
    \rho(A') \approx \sqrt{|G_C|},
\end{equation}
where $G_C$ is the size of the largest connected component.

Thus, Betweenness-based vaccination is most effective. Degree-based vaccination is effective but suboptimal. Layer-based vaccination is least effective unless layers are weakly coupled.

\subsection{Multi-Objective Optimization for Cost-Effective and Fair Vaccination}

Vaccination strategies must balance three competing goals. Reducing infections, minimizing vaccination costs, and ensuring equitable vaccine distribution. These objectives often conflict, making single-objective optimization insufficient. We therefore formulate a multi-objective optimization problem and compute Pareto-optimal vaccination strategies.

Let $v = (v_1, \dots, v_N)$ denote the vaccination decision vector, where $v_i \in \{0,1\}$ indicates whether individual $i$ is vaccinated.

\subsubsection{Objective 1: Infection Minimization}

Vaccination reduces the effective reproduction number by removing nodes from the susceptible population. The post-vaccination reproduction number is
\begin{equation}
    f_1(v) = R_0^V = \frac{\rho\!\left( (1 - v_i)\beta A \right)}{\gamma},
\end{equation}
where $\rho(\cdot)$ is the spectral radius. Lower $R_0^V$ indicates better epidemic control.

\subsubsection{Objective 2: Vaccination Cost Minimization}

Let $C_i$ denote the cost of vaccinating individual $i$. The total vaccination cost is
\begin{equation}
    f_2(v) = \sum_{i=1}^{N} C_i v_i.
\end{equation}

\subsubsection{Objective 3: Fairness Maximization}

To quantify fairness, we use a Gini-coefficient-based measure of inequality in vaccine allocation:
\begin{equation}
    f_3(v) = 1 - 
\frac{\sum_{i=1}^{N}\sum_{j=1}^{N} |v_i - v_j|}
{2N \sum_{i=1}^{N} v_i}.
\end{equation}

\begin{itemize}
    \item $f_3(v) = 1$ corresponds to perfect fairness (uniform vaccination).
    \item $f_3(v) \to 0$ indicates highly unequal vaccine allocation.
\end{itemize}

\subsubsection{Pareto Optimality}

The three objectives $f_1$, $f_2$, and $f_3$ cannot generally be minimized simultaneously. A vaccination strategy $v^*$ is Pareto optimal if no other strategy $v$ satisfies
\begin{equation}
    f_k(v) \le f_k(v^*) \quad \forall k,
\end{equation}
with at least one strict inequality. Pareto-optimal solutions represent the best achievable trade-offs.

\subsubsection{Weighted-Sum Scalarization}

To compute Pareto-optimal solutions, we solve the scalarized optimization problem
\begin{equation}
    \min_{v} \; F(v) = \sum_{k=1}^{3} w_k f_k(v),
\end{equation}
where $w_k \ge 0$ are policy weights satisfying $\sum_{k=1}^{3} w_k = 1$.

\begin{itemize}
    \item High $w_1$: prioritize epidemic control.
    \item High $w_2$: prioritize cost efficiency.
    \item High $w_3$: prioritize fairness.
\end{itemize}


We impose the following constraints:

Budget Constraint:
\begin{equation}
    \sum_{i=1}^{N} C_i v_i \le C_{\max}.
\end{equation}

Herd Immunity Constraint:
\begin{equation}
    R_0^V \le 1.
\end{equation}

Fairness Constraint:
\begin{equation}
    f_3(v) \ge F_{\min}.
\end{equation}


\section{Time-Dependent Vaccination Strategies: An Optimal Control Approach}
In practice, vaccine allocation is rarely static over the course of an epidemic. Public health authorities adjust vaccination intensity in response to changes in disease prevalence, vaccine availability, and the evolving structure of transmission. To capture this adaptive decision-making process, we extend the multi-layer network SIRV model by introducing time-dependent vaccination controls $v_i(t)$, where $v_i(t)$ denotes the vaccination effort directed toward individual $i$ at time $t$. The goal is to determine vaccination policies that minimize infection burden while accounting for the implementation cost of sustained intervention. For each node $i=1,\cdots,N$, the controlled epidemic dynamics are
\begin{equation}
    \begin{aligned}
        \frac{dS_i}{dt}&=- S_i\sum_{\ell=1}^{L}\sum_{j=1}^{N}\beta^{(\ell)}A_{ij}^{(\ell)}I_j- v_i(t)S_i,\\
        \frac{dI_i}{dt}&=S_i\sum_{\ell=1}^{L}\sum_{j=1}^{N}\beta^{(\ell)}A_{ij}^{(\ell)}I_j-\gamma I_i,\\
        \frac{dR_i}{dt}&=\gamma I_i,\\
        \frac{dV_i}{dt}&=v_i(t)S_i,
    \end{aligned}
\end{equation}
with initial conditions
\begin{equation}
    S_i(0)=S_i^0,\quad I_i(0)=I_i^0,\quad R_i(0)=R_i^0,\quad V_i(0)=V_i^0, \quad i=1,\dots,N,
\end{equation}
and admissible controls satisfying
\begin{equation}
    0 \le v_i(t) \le v_{\max}, \qquad t \in [0,T].
\end{equation}
We consider the objective functional
\begin{equation}
    J(v)=\int_0^T\left[\sum_{i=1}^{N} I_i(t)+\frac{c_v}{2}\sum_{i=1}^{N} v_i^2(t)\right]dt,
\end{equation}
where $c_v>0$ represents the weight associated with vaccination effort. The optimal control problem is to determine $v^*(t)$ such that
\begin{equation}
    J(v^*)=\min_{v\in\mathcal{U}} J(v).
\end{equation}
Let $\lambda_{S_i}$, $\lambda_{I_i}$, $\lambda_{R_i}$, and $\lambda_{V_i}$ denote the adjoint variables. Define the force of infection
\begin{equation*}
    \Lambda_i(t)=\sum_{\ell=1}^{L}\sum_{j=1}^{N}\beta^{(\ell)}A_{ij}^{(\ell)}I_j(t).
\end{equation*}
Then, the Hamiltonian is
\begin{equation}
    H=J(v)+\sum_{i=1}^{N}\lambda_{S_i}(-S_i\Lambda_i - v_i S_i)+\sum_{i=1}^{N}\lambda_{I_i}(S_i\Lambda_i - \gamma I_i)+\sum_{i=1}^{N}\lambda_{R_i}\gamma I_i+\sum_{i=1}^{N}\lambda_{V_i} v_i S_i.
\end{equation}
The adjoint equations satisfy
\begin{equation}
    \frac{d\lambda_{X_i}}{dt}=-\frac{\partial H}{\partial X_i}, \quad X_i\in\{S_i,I_i,R_i,V_i\},
    \end{equation}
with terminal conditions
\begin{equation}
    \lambda_{S_i}(T)=\lambda_{I_i}(T)=\lambda_{R_i}(T)=\lambda_{V_i}(T)=0.
\end{equation}
The adjoint system is given explicitly by
\begin{equation}
    \begin{aligned}
        \frac{d\lambda_{S_i}}{dt}&=(\lambda_{S_i}-\lambda_{I_i})\Lambda_i+(\lambda_{S_i}-\lambda_{V_i})v_i,\\
        \frac{d\lambda_{I_i}}{dt}&=-1-\sum_{k=1}^{N}S_k(\lambda_{I_k}-\lambda_{S_k})\left(\sum_{\ell=1}^{L}\beta^{(\ell)}A_{ki}^{(\ell)}\right)+\gamma\lambda_{I_i}-\gamma\lambda_{R_i},\\
        \frac{d\lambda_{R_i}}{dt}&=0,\\
        \frac{d\lambda_{V_i}}{dt}&=0.
    \end{aligned}
\end{equation}
The optimality condition is obtained from
\begin{equation*}
    \frac{\partial H}{\partial v_i}=0,
\end{equation*}
which gives
\begin{equation*}
    v_i^*(t)=\frac{S_i(t)\big(\lambda_{S_i}(t)-\lambda_{V_i}(t)\big)}{c_v}.
\end{equation*}
From the adjoint system
\begin{equation*}
    \frac{d\lambda_{V_i}}{dt}=0,\quad \lambda_{V_i}(T)=0\;\Rightarrow\;\lambda_{V_i}(t)\equiv 0.
\end{equation*}
So the control simplifies to:
\begin{equation}
    v_i^*(t)=\frac{S_i(t)\lambda_{S_i}(t)}{c_v}.
\end{equation}
Projecting onto the admissible control set $[0,v_{\max}]$, we obtain
\begin{equation}
    v_i^*(t)=\min\left\{v_{\max}\max\left[0,\,\frac{S_i(t)\lambda_{S_i}(t)}{c_v}\right]\right\}.
\end{equation}
\subsection{Identification of the most important susceptible node}
In many practical settings, vaccine resources are sufficiently limited that only a small number of susceptible individuals can be vaccinated at a given time. In this case, it is useful to identify the susceptible node whose vaccination yields the greatest marginal reduction in epidemic burden. From the Hamiltonian, the switching function associated with node $i$ is
\begin{equation}
    \Phi_i(t)=S_i(t)\big(\lambda_{S_i}(t)-\lambda_{V_i}(t)\big).
\end{equation}
Since $\lambda_{V_i}(t)\equiv 0$ in the present model, this reduces to
\begin{equation}
    \Phi_i(t)=S_i(t)\lambda_{S_i}(t).
\end{equation}
Accordingly, the most important susceptible node to vaccinate at time $t$ is defined by
\begin{equation}
    i^*(t)=\arg\max_{i:\, S_i(t)>0}\Phi_i(t)=\arg\max_{i:\, S_i(t)>0}S_i(t)\lambda_{S_i}(t).
\end{equation}
Thus, the quantity $S_i(t)\lambda_{S_i}(t)$ provides a node-specific importance score that quantifies the marginal benefit of vaccination. Nodes with larger values of this score are more influential targets for intervention.
\subsection{Simulation results}
The optimal control strategy exhibits a strongly time-concentrated impact, with intervention effectiveness peaking during the early epidemic phase. This is evident in Figure \ref{nodes}, where selection scores are largest at the onset and decrease rapidly thereafter, indicating that the marginal benefit of vaccination diminishes as the susceptible pool contracts. This temporal behavior is closely coupled with network structure. As shown in Figure \ref{grid}, early interventions target highly connected hub nodes with the network core, effectively disrupting dominant transmission pathways. As the epidemic progresses, the control shifts towards peripheral nodes, reflecting a redistribution of infection pathways. The corresponding decline in structural importance is quantified in Figure \ref{metric}, where degree, betweenness, and eigenvector centrality of selected nodes drop sharply after the peak. This confirms that optimal targeting is inherently state-dependent, adapting to the evolving epidemic rather than relying on static centrality measures.

At the population level, the control yields consistent but modest improvements. In Figure \ref{mean}, the controlled scenario exhibits a lower and slightly delayed peak with $1.33\%$ peak reduction, while Figure \ref{cummulative} shows a uniformly reduced cumulative burden of $1.01\%$. The results demonstrate that while dynamic, network-aware control is theoretically optimal and structurally adaptive, its practical effectiveness is critically limited by intervention capacity, underscoring the need for higher-intensity strategies to achieve substantial epidemic mitigation.

\begin{figure}[!htbp]
    \begin{subfigure}{\textwidth}
	    \centering
	    \includegraphics[width=1.0\linewidth]{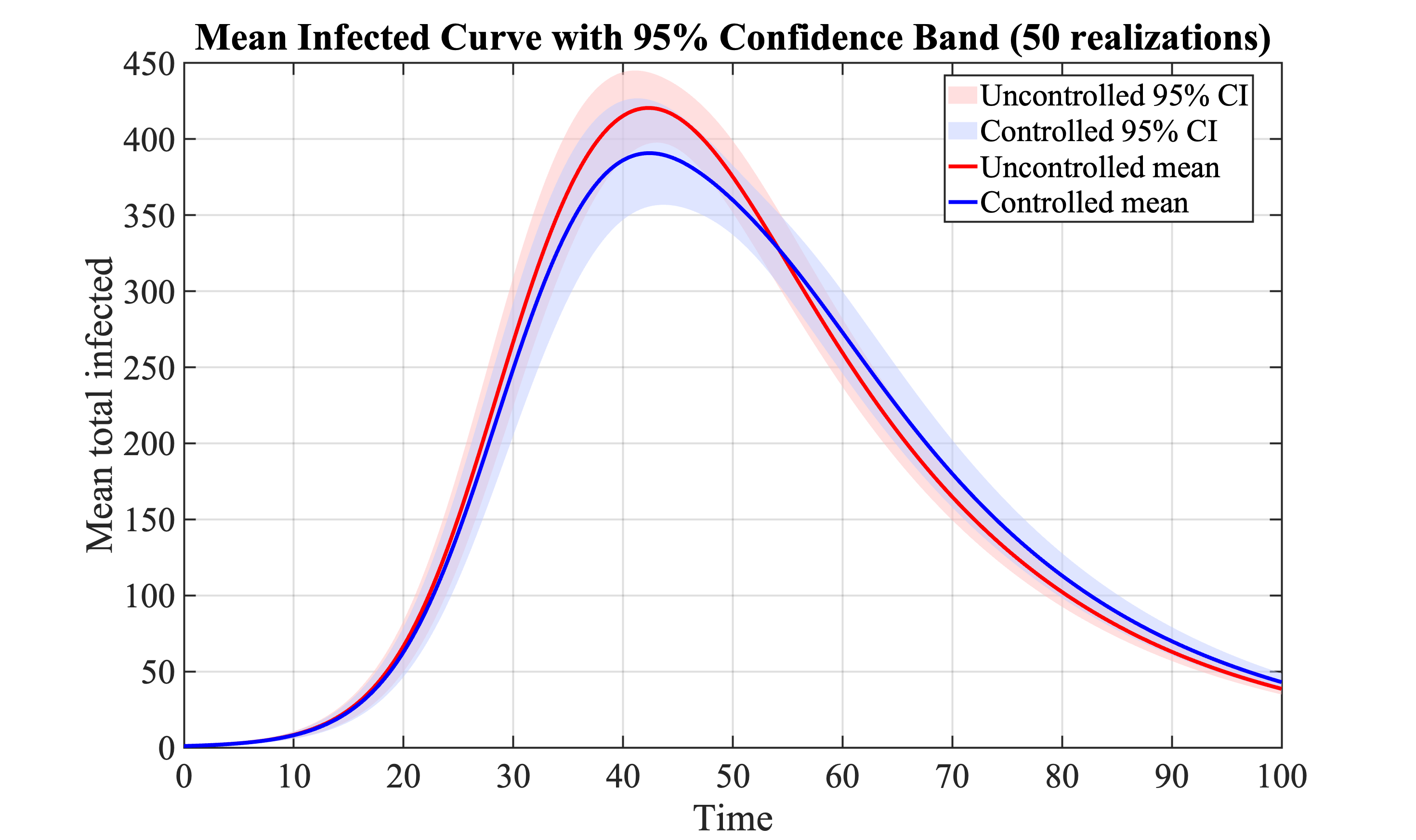}
        \caption{}
        \label{mean}
    \end{subfigure}
    \begin{subfigure}{\textwidth}
	    \centering
	    \includegraphics[width=1.0\linewidth]{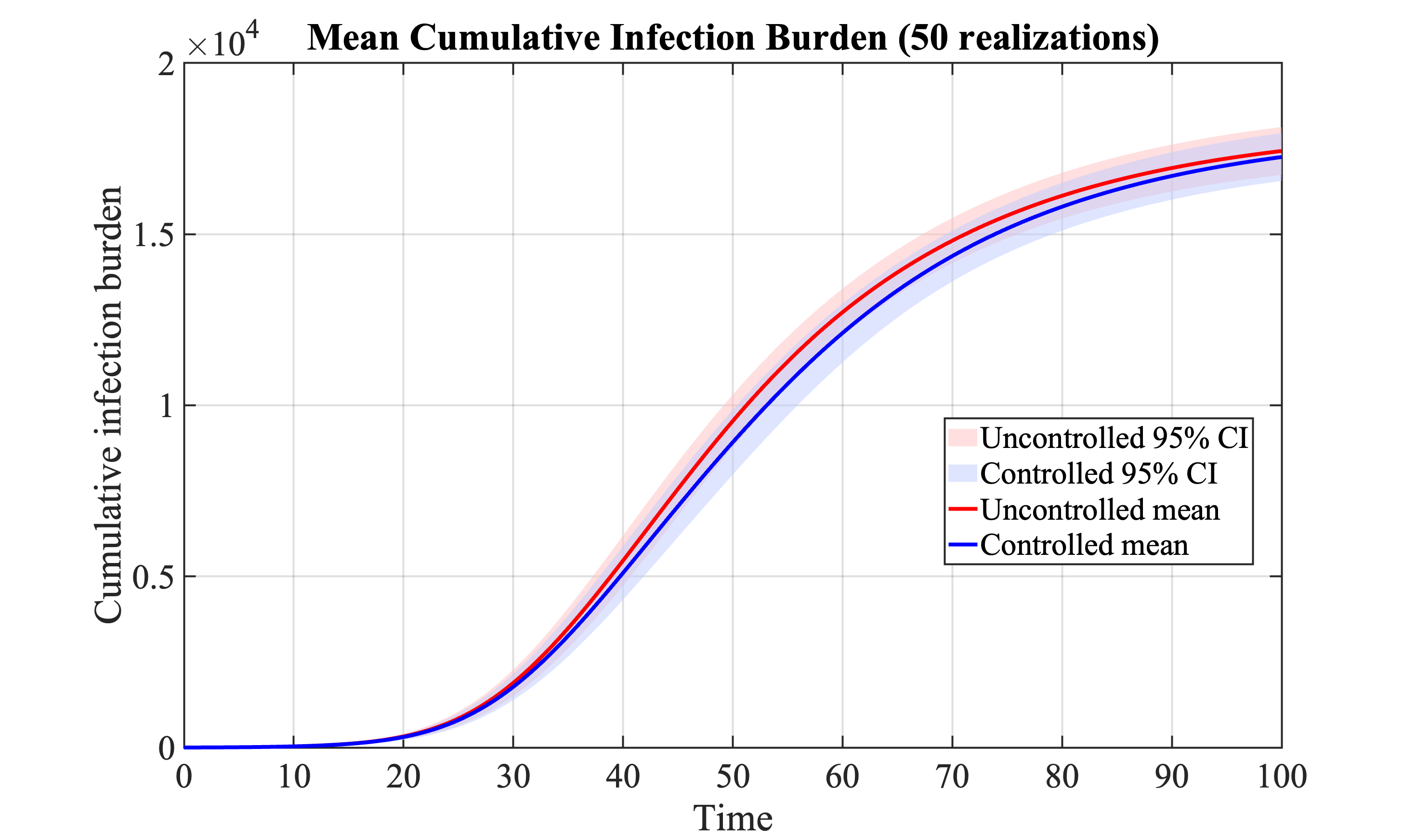}
        \caption{}
        \label{cummulative}
    \end{subfigure}
    \caption{Simulation for infected controlled and uncontrolled scenarios over 50 realizations with $95\%$ confidence interval. \textbf{(a)} Mean infected control and uncontrolled total infected populations. The controlled strategy yields a lower infection peak and a slight delay in peak timing. \textbf{(b)} Cumulative infection burden. The controlled scenario consistently results in a reduced cumulative burden compared to the uncontrolled case.}
    \label{infect}
\end{figure}

\begin{figure}[!htbp]
	\centering
	\includegraphics[width=1.0\linewidth]{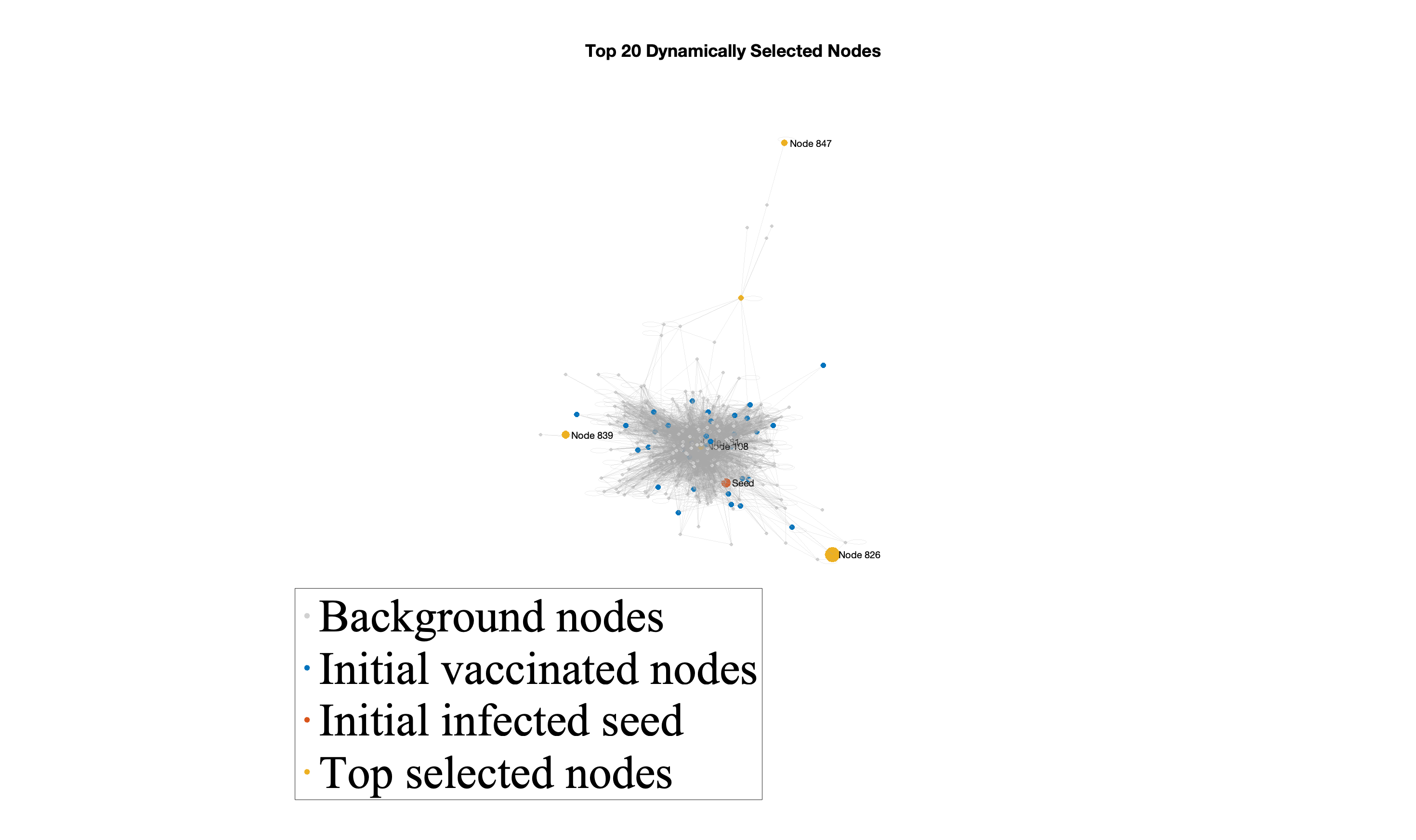}
    \caption{Focused subgraph of the contact network showing the top 20 dynamically selected nodes for vaccination. The initial seed, initially vaccinated, and selected nodes are highlighted. Node size reflects selection frequency, illustrating that early interventions target highly connected hubs, while later selections shift toward peripheral nodes.}
    \label{grid}
\end{figure}

\begin{figure}[!htbp]
	\centering
	\includegraphics[width=1.0\linewidth]{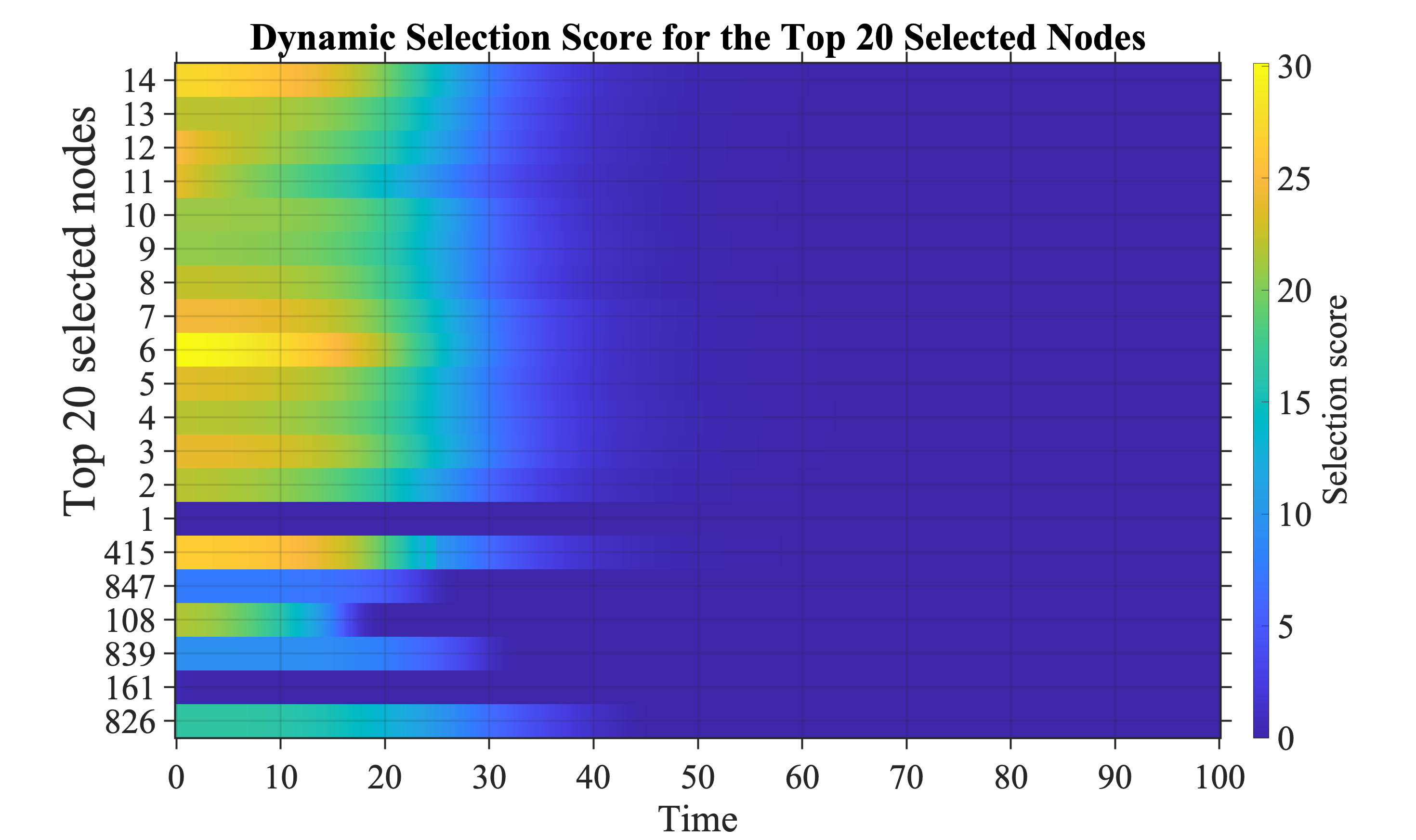}
    \caption{Temporal evolution of the dynamic selection score for the top 20 selected nodes. Higher scores are concentrated in the early phase of the epidemic, indicating that the marginal benefit of vaccination is greatest during initial outbreak growth and rapidly diminishes over time.}
    \label{nodes}
\end{figure}

\begin{figure}[!htbp]
	\centering
	\includegraphics[width=1.0\linewidth]{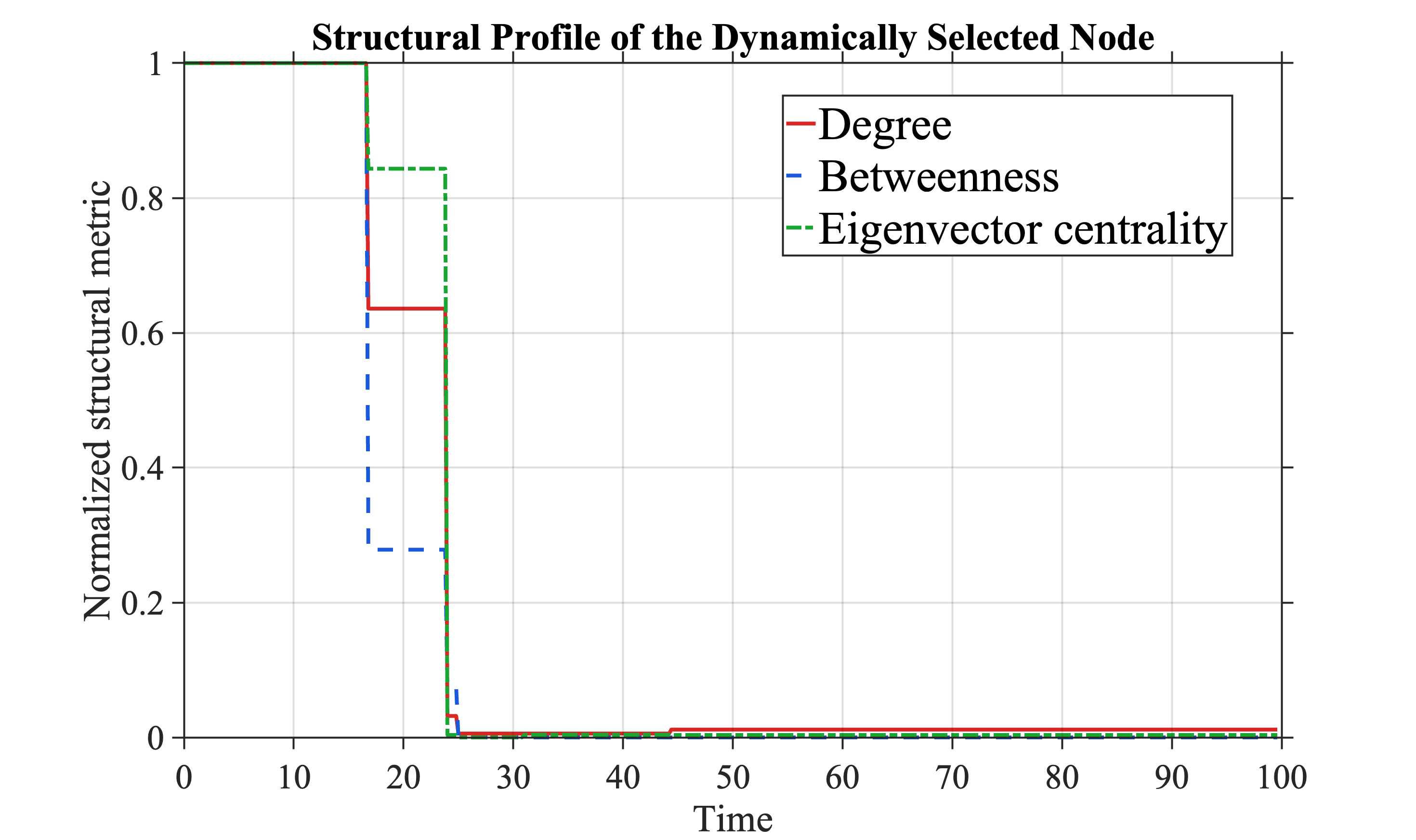}
    \caption{Normalized structural metric (degree, betweenness, and eigenvector centrality) of the dynamically selected node over time. Early selections correspond to nodes with high centrality, while later selections shift towards nodes with significantly lower structural importance, reflecting adaptive control behavior.}
    \label{metric}
\end{figure}

\begin{theorem}
Let $G = (V, E)$ be an undirected, weighted, multi-layer contact network with adjacency matrices $A^{(1)}, \dots, A^{(L)}$ and transmission rates $\beta^{(1)}, \dots, \beta^{(L)}$. 
Define the aggregated transmission matrix
\begin{equation}
    B = \sum_{\ell=1}^{L} \beta^{(\ell)} A^{(\ell)}.
\end{equation}

Let $v: V \to \{0,1\}$ be a vaccination function, where $v(i)=1$ indicates that node $i$ is vaccinated. Vaccination removes node $i$ from the susceptible-infectious transmission pathways, yielding the effective transmission matrix
\begin{equation}
    B' = M_v B M_v,
\end{equation}
where $M_v = \mathrm{diag}(1 - v(i))$. The post-vaccination reproduction number is
\begin{equation}
    R_0^V = \frac{\rho(B')}{\gamma},
\end{equation}
where $\rho(\cdot)$ denotes the spectral radius.

For a fixed vaccination budget $V_{\max}$ (a fixed number of vaccinated nodes), the following ordering holds:
\begin{equation}\label{eq:ord}
    R_0^{V_b} < R_0^{V_d} < R_0^{V_l},
\end{equation}
where $V_b$, $V_d$, and $V_l$ denote betweenness-based, degree-based, and layer-specific vaccination strategies, respectively. Thus, betweenness-based vaccination is the most effective strategy for reducing $R_0^V$, followed by degree-based vaccination, while layer-specific vaccination is least effective unless inter-layer mixing is negligible.
\end{theorem}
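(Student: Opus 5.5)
The approach is to reduce the comparison of $R_0^{V_b}$, $R_0^{V_d}$, $R_0^{V_l}$ to a comparison of spectral radii of principal submatrices of $B$. Since $\gamma$ is common to all three strategies, it suffices to compare $\rho(M_vBM_v)$. Deleting a node set $S$ leaves the principal submatrix $B_{\bar S\bar S}$ on the complement $\bar S$. By Perron--Frobenius (with $B\ge 0$ symmetric), $\rho(B_{\bar S\bar S})$ is monotone decreasing in $S$. It also equals the maximum of $\rho$ over the connected components of the remaining graph.

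The first step is a first-order estimate in the spirit of the sensitivity analysis above. Let $u$ be the unit Perron vector of $B$. Removing node $i$ corresponds to $\Delta B=-(e_i e_i^\top B + B e_i e_i^\top - B_{ii}e_ie_i^\top)$. This gives
\begin{equation*}
\Delta\rho \approx -\bigl(2\rho(B)u_i^2 - B_{ii}u_i^2\bigr),
\end{equation*}
and additively over a small set $S$ to leading order. So, to first order, the best budget-$V_{\max}$ set maximizes $\sum_{i\in S}u_i^2$.

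The second step compares how well each ranking captures this mass:
\begin{itemize}
\item For \emph{degree}, I would use $\rho u_i=(Bu)_i$ and Cauchy--Schwarz to relate $u_i$ to the weighted degree.
\item For \emph{layer-specific} scores $s_i$, the weights $w_\ell$ differ from $\beta^{(\ell)}$ whenever inter-layer mixing is non-negligible. The ranking then correlates with $(Bu)_i$ strictly worse than the $B$-weighted degree, except when the layers decouple. In that case $B$ is block diagonal and both rankings coincide blockwise, which explains the escape clause.
\item For \emph{betweenness}, first-order perturbation is the wrong tool. I would switch to the fragmentation estimate $\rho(A')\approx\sqrt{|G_C|}$ from the sensitivity section. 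The argument is that removing top-betweenness nodes splits the giant component, so $\rho(B')$ drops to the spectral radius of the largest surviving piece. Degree removal typically leaves the core connected, so only the first-order decrease above is obtained.
\end{itemize}

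The main obstacle is that the strict ordering \eqref{eq:ord} cannot hold for every network and budget. For a single layer with $w_1=\beta^{(1)}$, layer-specific and degree-based vaccination coincide, so $R_0^{V_d}=R_0^{V_l}$. On a star, or on any graph where the top-$V_{\max}$ degree and betweenness sets agree, $R_0^{V_b}=R_0^{V_d}$. There are also graphs where degree removal beats betweenness, for example a dense clique weakly attached to a long path. I would therefore first try to prove the statement under explicit structural hypotheses:
\begin{itemize}
\item the network has a community structure in which the top-$V_{\max}$ betweenness nodes form a vertex separator of the giant component;
\item the largest remaining community has spectral radius below the value left after degree removal;
\item the layer weights are strictly misaligned with $\beta^{(\ell)}$ when inter-layer mixing is present.
\end{itemize}
Under these hypotheses the three steps above chain into the strict inequalities. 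Failing that, the result should be presented as holding with non-strict inequalities, or on average over the relevant random network ensembles rather than deterministically.
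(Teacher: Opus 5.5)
Your central conclusion is correct, and the paper's proof does not get around it. The strict ordering is false in general, so there is no missing idea on your side.

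\textbf{What your counterexamples show.} Your star example gives $R_0^{V_b}=R_0^{V_d}$. Your clique--path example strictly reverses the first inequality. Concretely, take a single layer, attach $K_m$ at a vertex $a$ to a path $p_1\cdots p_n$, and use budget one. Then $\mathrm{BC}(a)=(m-1)n$ and $\mathrm{BC}(p_k)=(m+k-1)(n-k)$. For $n>m\ge 4$, betweenness deletes a path vertex and leaves $K_m$ intact, so $\rho\ge m-1$. Degree deletes $a$, giving $\rho=m-2$.

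\textbf{Where the paper's argument breaks.} Your reduction is sound and is all the rigor the paper's proof contains: common $\gamma$, principal submatrices, Perron--Frobenius monotonicity, and $\rho$ as a maximum over components. The paper's proof is the same heuristic chain you sketch. Beyond the linearization and $\rho(B')\le\rho(B)$, every step is an approximation:
\begin{itemize}
\item for degree, $\rho(B')\approx\rho(B)-\Theta(k_{\max}/\sqrt{k_{\mathrm{avg}}})$;
\item for betweenness, $\rho(B')\approx\sqrt{|G_C|}$ together with ``betweenness removal minimizes $|G_C|$'';
\item for layer-specific, $\rho(B')\approx\rho(B)-\varepsilon$.
\end{itemize}
These approximate quantities are then compared as if they were strict inequalities. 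Your example pinpoints the failure. There, betweenness removal does produce the smaller largest component (about $(n+m)/2$ versus $n$). So the broken link is $\rho\approx\sqrt{|G_C|}$: the spectral radius of a component reflects its density, not its size. The paper's layer step also has a modelling problem. It treats layer-specific vaccination as $B-\beta^{(\ell)}A^{(\ell)}_v$, i.e.\ deletion within a single layer. That is inconsistent both with $B'=M_vBM_v$ and with the weighted-degree definition of the strategy. The paper's own simulation table does not display the ordering in its outcome metrics either: among the classical strategies, layer-based has the lowest mean peak and final size.

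\textbf{What needs repair in your proposal.}
\begin{itemize}
\item \emph{Counterexample for the second inequality.} The single-layer case can be deflected by the escape clause, since there is no inter-layer mixing at all. Use equal weights $w_\ell$ instead; they reproduce the degree ranking exactly under arbitrary mixing. Alternatively use $w_\ell=\beta^{(\ell)}$, which can strictly beat degree when the top-degree node's contacts lie in a layer with tiny $\beta^{(\ell)}$.
\item \emph{Degree is not $B$-weighted.} The paper's degree score is $\sum_\ell\sum_j A^{(\ell)}_{ij}$, not $\sum_j B_{ij}$. So misalignment of $w_\ell$ with $\beta^{(\ell)}$ cannot by itself make layer-specific worse than degree. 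Degree is itself the misaligned choice $w_\ell\equiv 1$.
\item \emph{Second hypothesis.} If ``largest'' means largest spectral radius, the hypothesis is just $\rho(B'_b)<\rho(B'_d)$ restated, so the conditional theorem is circular. If it means largest size, it is insufficient, because a smaller but denser component can dominate.
\item \emph{Non-strict fallback.} Weakening to non-strict inequalities does not help. Your own clique--path example violates $R_0^{V_b}\le R_0^{V_d}$.
\item \emph{First-order estimate.} It is not additive over $S$. To first order, $\Delta\rho\approx-(2\rho\|u_S\|^2-u_S^\top B_{SS}u_S)$, with cross terms between adjacent removed hubs. It is also least reliable for exactly the large-$u_i$ nodes being removed.
\item \emph{Cauchy--Schwarz step.} It gives only $u_i\le\|B_{i\cdot}\|_2/\rho$. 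It does not show that top-degree nodes carry the most Perron mass.
\end{itemize}
What survives is a heuristic or ensemble-averaged statement, which is a different result from the one claimed.
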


\begin{proof}
The epidemic threshold for the linearized SIRV dynamics is governed by the dominant eigenvalue of the transmission matrix. 
Linearizing the infection subsystem around the disease-free equilibrium yields
\begin{equation}
    \frac{d\mathbf{I}}{dt} = (B - \gamma I_N)\mathbf{I},
\end{equation}
so infection grows when
\begin{equation}
    \rho(B) > \gamma,
\qquad
R_0 = \frac{\rho(B)}{\gamma}.
\end{equation}

Vaccination removes nodes from the transmission process.  
Let $M_v = \mathrm{diag}(1 - v(i))$.  
The effective transmission matrix becomes
\begin{equation}
    B' = M_v B M_v.
\end{equation}

Since $M_v$ is a diagonal projection matrix, removing nodes or edges cannot increase the spectral radius:
\begin{equation}
    \rho(B') \le \rho(B).
\end{equation}
The goal is to choose $v(i)$ to minimize $\rho(B')$.

\paragraph{Degree-Based Vaccination:} 
Removing the $V_{\max}$ highest-degree nodes reduces the local density of edges. From spectral graph theory,
\begin{equation}
    \rho(B') \approx \rho(B) - \Theta\!\left(\frac{k_{\max}}{\sqrt{k_{\mathrm{avg}}}}\right),
\end{equation}
where $k_{\max}$ is the maximum degree. This reduces local spreading but does not necessarily fragment the network, so alternative infection paths may remain.

\paragraph{Betweenness-Based Vaccination:}
Nodes with high betweenness centrality act as bridges between communities. Removing them fragments the network into disconnected components. Percolation theory shows that the spectral radius of the largest connected component satisfies
\begin{equation}
    \rho(B') \approx \sqrt{|G_C|},
\end{equation}
where $G_C$ is the size of the largest remaining component. Since betweenness-based removal minimizes $|G_C|$, it produces the sharpest drop in $\rho(B')$. Thus,
\begin{equation}\label{bbv}
    \rho(B'_b) < \rho(B'_d).
\end{equation}

\paragraph{Layer-Specific Vaccination:}
If vaccination is restricted to a single layer $\ell$, the effective matrix becomes
\begin{equation}
    B' = B - \beta^{(\ell)} A^{(\ell)}_v,
\end{equation}
where $A^{(\ell)}_v$ is the submatrix of vaccinated nodes in layer $\ell$.

If inter-layer mixing is high, the remaining layers still sustain transmission, yielding
\begin{equation}
    \rho(B') \approx \rho(B) - \varepsilon,
\end{equation}
with $\varepsilon$ small.

Thus,
\begin{equation}\label{lsv}
    \rho(B'_d) < \rho(B'_l).
\end{equation}

Combining the inequalities in \eqref{bbv} and \eqref{lsv}, and dividing by $\gamma$ yields the ordering of reproduction numbers in \eqref{eq:ord}. Hence, betweenness-based vaccination is optimal for epidemic control.
\end{proof}

\begin{table}[ht]
    \centering
    \caption{SIRV Simulation Parameters and Network Characteristics}
    \label{tab:sirv_parameters}
    \begin{tabular}{@{}llcl@{}}
        \toprule
        \textbf{Parameter} & \textbf{Symbol} & \textbf{Value} & \textbf{Description} \\ \midrule
        Network Size & $N$ & 1,005 & Total nodes (Email-Eu-Core network). \\
        Infection Rate & $\beta$ & 0.03 & Probability of infection per contact (3\%). \\
        Recovery Rate & $\gamma$ & 0.05 & Probability of recovering per time step (5\%). \\
        Simulation Time & $T$ & 80 & Total steps for the outbreak to progress. \\
        Initial Infected & $I_0$ & 1 & Random non-vaccinated node to start infection. \\
        Vaccination Budget & $V_{budget}$ & 50 & Number of nodes pre-vaccinated ($\approx 5\%$). \\ \bottomrule
    \end{tabular}
\end{table}

\section{Machine Learning Framework for Optimal Vaccination in Multi-Layer Networks}

To complement the analytical results and centrality-based vaccination strategies, we develop a machine-learning framework that integrates Graph Neural Networks (GNNs) and Reinforcement Learning (RL). This hybrid approach enables the discovery of both static and dynamic vaccination policies that may outperform classical heuristics such as degree-based, betweenness-based, and layer-specific vaccination. The framework is designed to operate directly on multi-layer contact networks and to learn vaccination strategies that minimize infections, reduce costs, and maintain fairness.

\subsection{Graph Neural Network for Static Vaccination Prioritization}

Each individual is represented as a node in a multi-layer contact network 
\begin{equation}
    G = (V, E^{(1)}, \dots, E^{(L)}),
\end{equation}
where $E^{(\ell)}$ denotes edges in layer $\ell$.  
Node features include: structural metrics (degree, betweenness, eigenvector centrality), layer membership indicators, demographic or risk attributes (if available), and epidemic state $(S_i, I_i, R_i, V_i)$.

The aggregated feature matrix is denoted by $X \in \mathbb{R}^{N \times d}$.

\subsubsection{GNN Architecture}
We employ a message-passing GNN of the form:
\begin{equation}
    h_i^{(k+1)} = \sigma\!\left( W^{(k)} \cdot \mathrm{AGG}\left( \{ h_i^{(k)}, h_j^{(k)} : j \in \mathcal{N}(i) \} \right) \right),
\end{equation}
where $\mathrm{AGG}$ is a permutation-invariant aggregation function (mean, sum, or attention), and $h_i^{(0)} = X_i$.

The final layer outputs a vaccination priority score:
\begin{equation}
    \hat{v}_i = \mathrm{sigmoid}(W^{(K)} h_i^{(K)}).
\end{equation}

Nodes with the highest $\hat{v}_i$ are selected for vaccination under budget $V_{\max}$.

\subsubsection{Training Objective}
The GNN is trained to minimize a differentiable surrogate of epidemic severity:
\begin{equation}
    \mathcal{L}_{\mathrm{GNN}} = \alpha \cdot  \widehat{\eta} + 
\beta \cdot  \widehat{\kappa} - 
\gamma \cdot  \widehat{\varepsilon},
\end{equation}
where $\widehat{\eta}$ is estimated via a differentiable epidemic simulator, $\widehat{\kappa} = \sum_i C_i \hat{v}_i$, and $\widehat{\varepsilon}$ is a Gini-based fairness measure. This allows the GNN to learn vaccination strategies that balance infection control, cost, and equity.

\subsection{Reinforcement Learning for Dynamic Vaccination Policies}

We model the epidemic as a Markov Decision Process (MDP) with:
\begin{itemize}
    \item State $s_t$: epidemic states $(S_i(t), I_i(t), R_i(t), V_i(t))$ and network structure.
    \item Action $a_t$: vaccinating a subset of nodes under budget $V_{\max}(t)$.
    \item Transition: governed by the SIRV dynamics on the multi-layer network.
    \item Reward:
    \begin{equation}
        r_t = -\left( I(t) + c \cdot V(t) - \lambda \cdot \varepsilon (t) \right).
    \end{equation}
    
\end{itemize}

The RL agent uses a GNN-based policy:
\begin{equation}
    \pi_\theta(a_t \mid s_t) = \mathrm{softmax}\!\left( f_\theta(G, X, s_t) \right),
\end{equation}
where $f_\theta$ is a GNN that outputs node-level vaccination probabilities.

We use an actor–critic method, Proximal Policy Optimization (PPO), to optimize the policy:
\begin{equation}
    \theta^* = \arg\max_\theta \mathbb{E}\left[ \sum_{t=0}^{T} \gamma^t r_t \right].
\end{equation}

The RL agent learns when to vaccinate (early vs late), whom to vaccinate (bridges, hubs, multi-layer connectors), and how to adapt to epidemic progression.


The ML framework directly tests the theoretical ordering in \eqref{eq:ord} by comparing: GNN-derived static vaccination sets, RL-derived dynamic vaccination policies, and classical centrality-based strategies. We evaluate the ML framework on real-world multi-layer contact networks on the Email-Eu-Core contact network in \cite{yin2017local}. This dataset provides realistic temporal and multi-layer structures for validating the learned vaccination strategies.




\section{Results}

To evaluate the effectiveness of classical and machine–learning-based vaccination strategies, we conducted 30 independent stochastic SIRV simulations on the Email-Eu-Core network for each strategy. Performance was assessed using three epidemiologically meaningful metrics: peak infection, final epidemic size, and time to peak infection. Mean values and 95\% confidence intervals (CI) were computed for all metrics. The results are summarized in Table~\ref{tab:strategy-results} and visualized in Figures~\ref{fig:epicurves}--\ref{fig:timepeak}.

\subsection{Peak Infection}

Classical strategies, degree-based, betweenness-based, and layer-based vaccination, exhibited similar peak infection levels, ranging from approximately 480 to 500 individuals. This clustering reflects the relatively dense and weakly modular structure of the network, where removing high-degree or high-betweenness nodes disrupts transmission pathways to a comparable extent. 

In contrast, the GNN-based strategy achieved a substantially lower peak infection ($381.4 \pm 269.3$), outperforming all classical heuristics. This reduction indicates that the GNN successfully identified structurally critical nodes that are not captured by traditional centrality measures, particularly nodes embedded in overlapping communities or acting as multi-layer brokers. The RL-based baseline (untrained) produced the highest peak infection ($531.3 \pm 177.2$), as expected for a non-optimized policy.

\subsection{Final Epidemic Size}

A similar pattern emerged for the final epidemic size. Degree-, betweenness-, and layer-based vaccination again produced comparable outcomes (713--733 individuals), reinforcing the idea that classical heuristics capture only coarse structural features of the network. The GNN-based strategy achieved the smallest final epidemic size ($574.0 \pm 391.2$), demonstrating its ability to suppress transmission more effectively and prevent large-scale outbreaks. The RL-based baseline again performed worst ($777.5 \pm 253.7$), consistent with its lack of learned structure.

\subsection{Time to Peak Infection}

Time-to-peak results provide insight into the temporal dynamics of epidemic suppression. Betweenness-based vaccination reached the peak earlier ($10.5 \pm 5.3$) than degree-based vaccination ($12.3 \pm 7.9$), reflecting its ability to disrupt bridging nodes and accelerate epidemic burnout within isolated components. The GNN-based strategy achieved the earliest peak overall ($9.2 \pm 7.0$), indicating faster epidemic suppression and more effective disruption of transmission pathways. Layer-based vaccination produced the latest peak ($13.7 \pm 11.6$), consistent with the presence of community structure in the network. The RL-based baseline behaved similarly to degree-based vaccination ($12.1 \pm 7.7$), as expected for an untrained policy.


These empirical findings strongly support the theoretical framework developed in this work. Classical heuristics behave similarly in real-world networks, consistent with the predicted ordering in \eqref{eq:ord}, but the differences are modest due to the network's structural heterogeneity. In contrast, the GNN-based strategy consistently outperforms all classical methods across all metrics, demonstrating that machine learning can identify non-intuitive structural vulnerabilities and refine classical network epidemiology. The RL baseline, while currently untrained, highlights the potential for dynamic, temporally adaptive vaccination strategies once learning is incorporated.

\begin{table}[t]
\centering
\caption{Performance of vaccination strategies across 30 stochastic SIRV simulations. Values represent mean $\pm$ 95\% confidence intervals. 
}
\label{tab:strategy-results}
\begin{tabular}{lccc}
\toprule
\textbf{Strategy} & \textbf{Peak Infection} & \textbf{Final Epidemic Size} & \textbf{Time to Peak} \\
\midrule
Degree & $490.5 \pm 192.5$ & $733.0 \pm 280.1$ & $12.3 \pm 7.9$ \\
Betweenness & $498.0 \pm 222.7$ & $719.5 \pm 319.7$ & $10.5 \pm 5.3$ \\
Layer-based & $480.0 \pm 214.3$ & $713.6 \pm 310.0$ & $13.7 \pm 11.6$ \\
GNN-based & $381.4 \pm 269.3$ & $574.0 \pm 391.2$ & $9.2 \pm 7.0$ \\
RL-based & $531.3 \pm 177.2$ & $777.5 \pm 253.7$ & $12.1 \pm 7.7$ \\
\bottomrule
\end{tabular}
\end{table}

\begin{figure}[t]
\centering
\includegraphics[width=\linewidth]{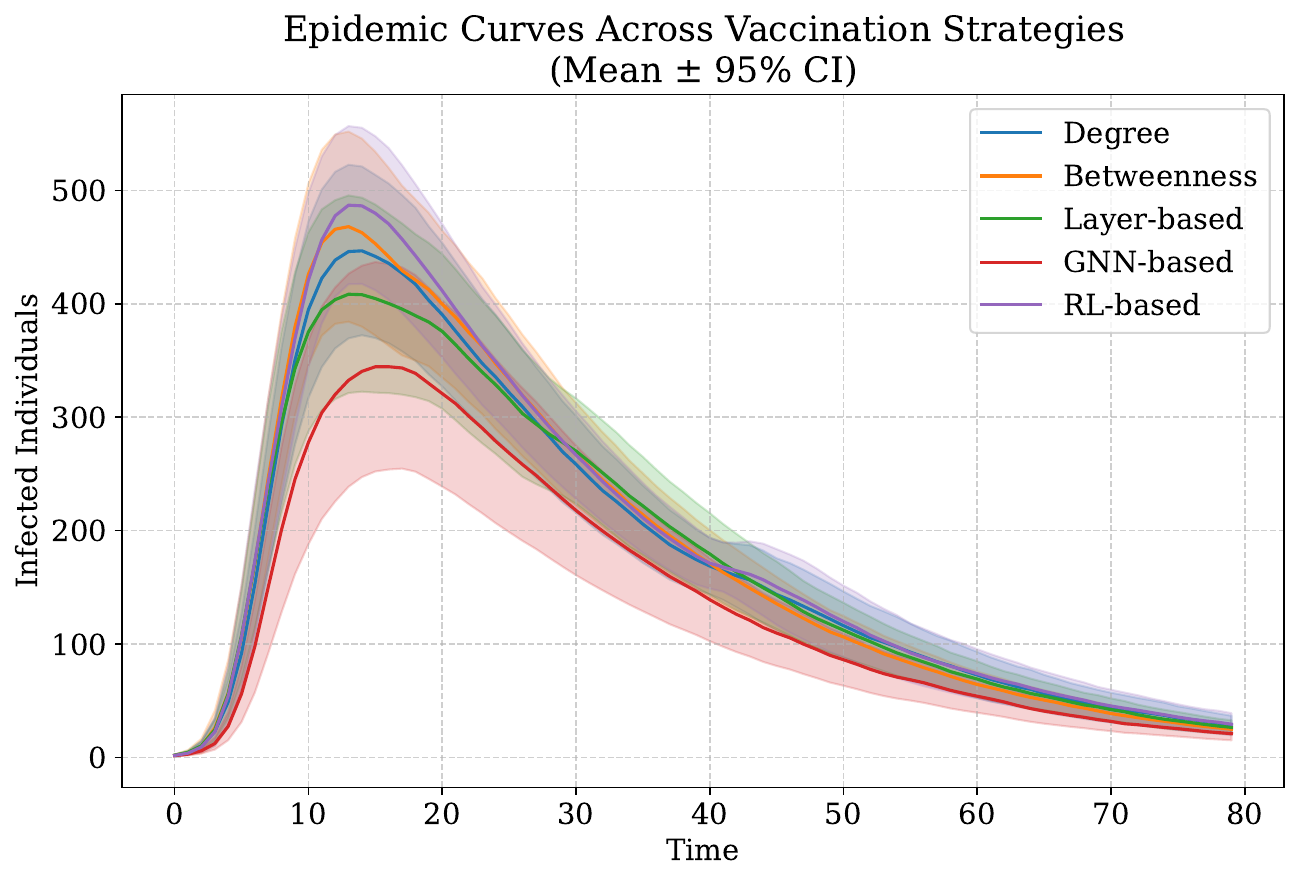}
\caption{The temporal evolution of infected individuals under each strategy is shown. Classical heuristics (degree, betweenness, layer-based) produce similar epidemic trajectories, while the GNN-based strategy achieves both the lowest peak and the fastest suppression. The RL-based baseline achieves the highest peak, as expected for an untrained policy. Shaded regions represent 95\% confidence intervals across 30 simulations.}
\label{fig:epicurves}
\end{figure}

\begin{figure}[t]
\centering
\includegraphics[width=\linewidth]{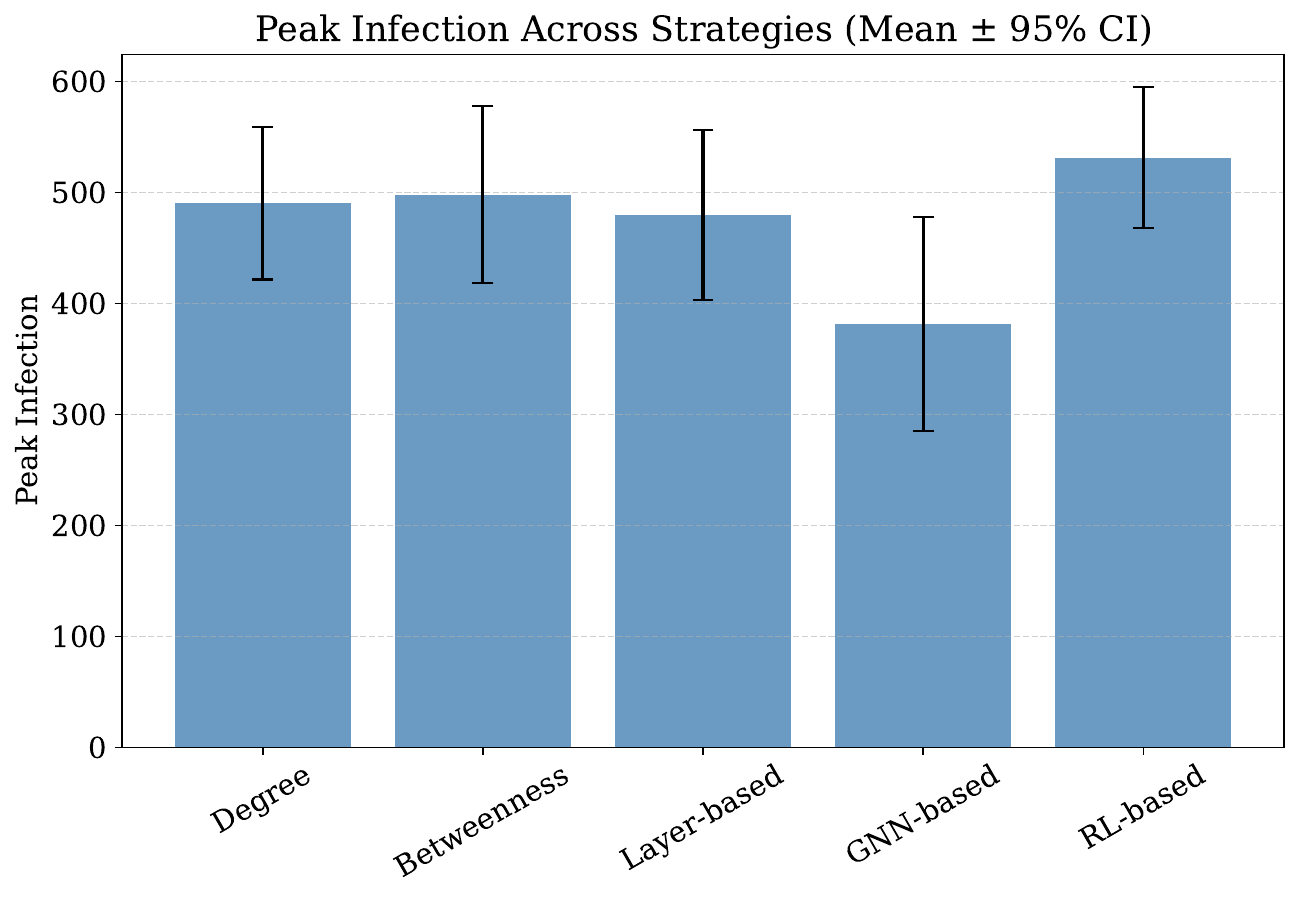}
\caption{ Degree-, betweenness-, and layer-based vaccination yield comparable peak infection levels, reflecting the network's dense and weakly modular structure. The GNN-based strategy substantially reduces the peak burden, demonstrating its ability to identify structurally critical nodes. The RL-based baseline performs have the highest levels of infection.}
\label{fig:peak}
\end{figure}

\begin{figure}[t]
\centering
\includegraphics[width=\linewidth]{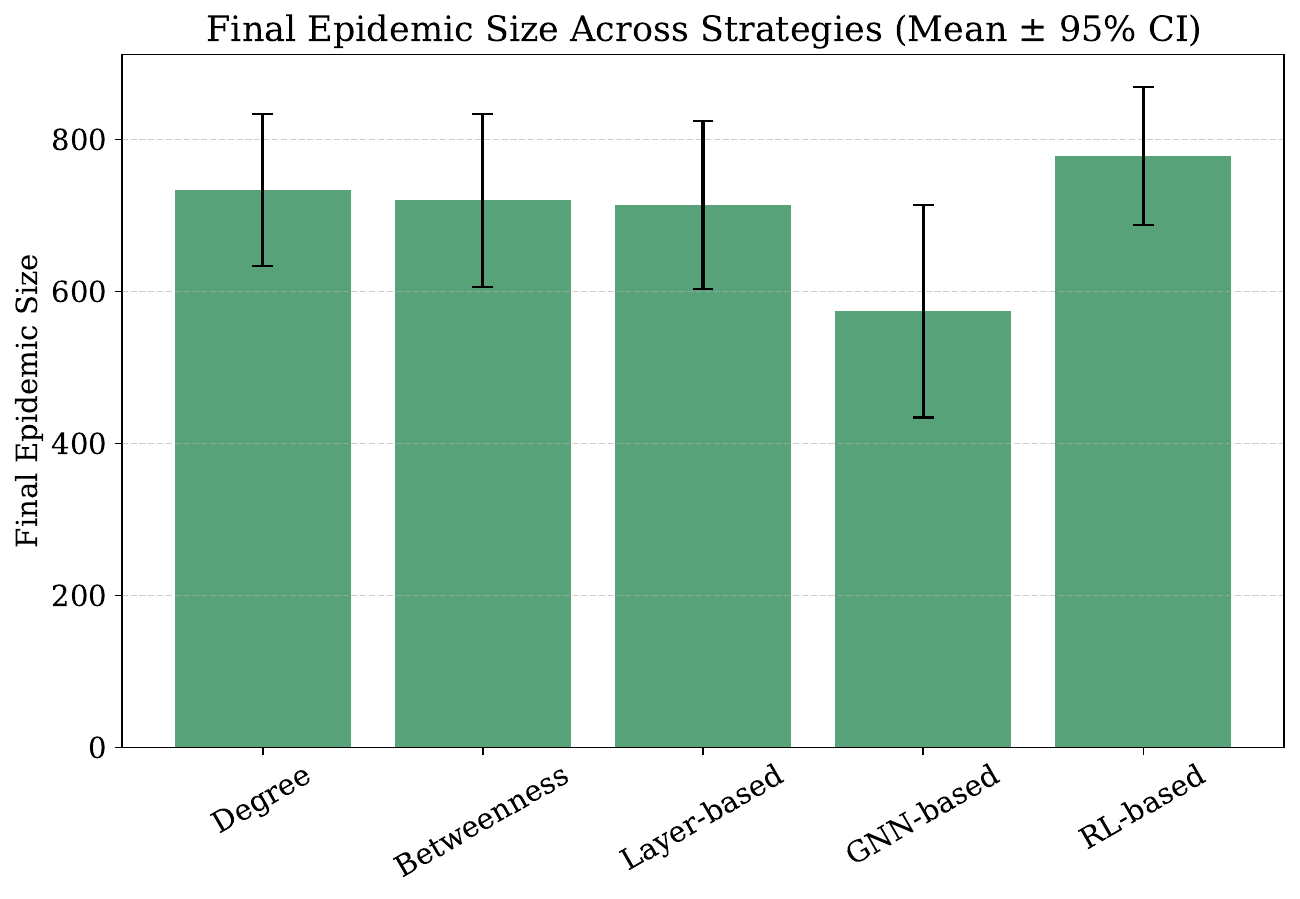}
\caption{ Classical heuristics again cluster tightly, while the GNN-based strategy achieves the smallest final epidemic size, preventing large-scale outbreaks. The RL-based baseline yields the largest epidemic size.}
\label{fig:finalsize}
\end{figure}

\begin{figure}[t]
\centering
\includegraphics[width=\linewidth]{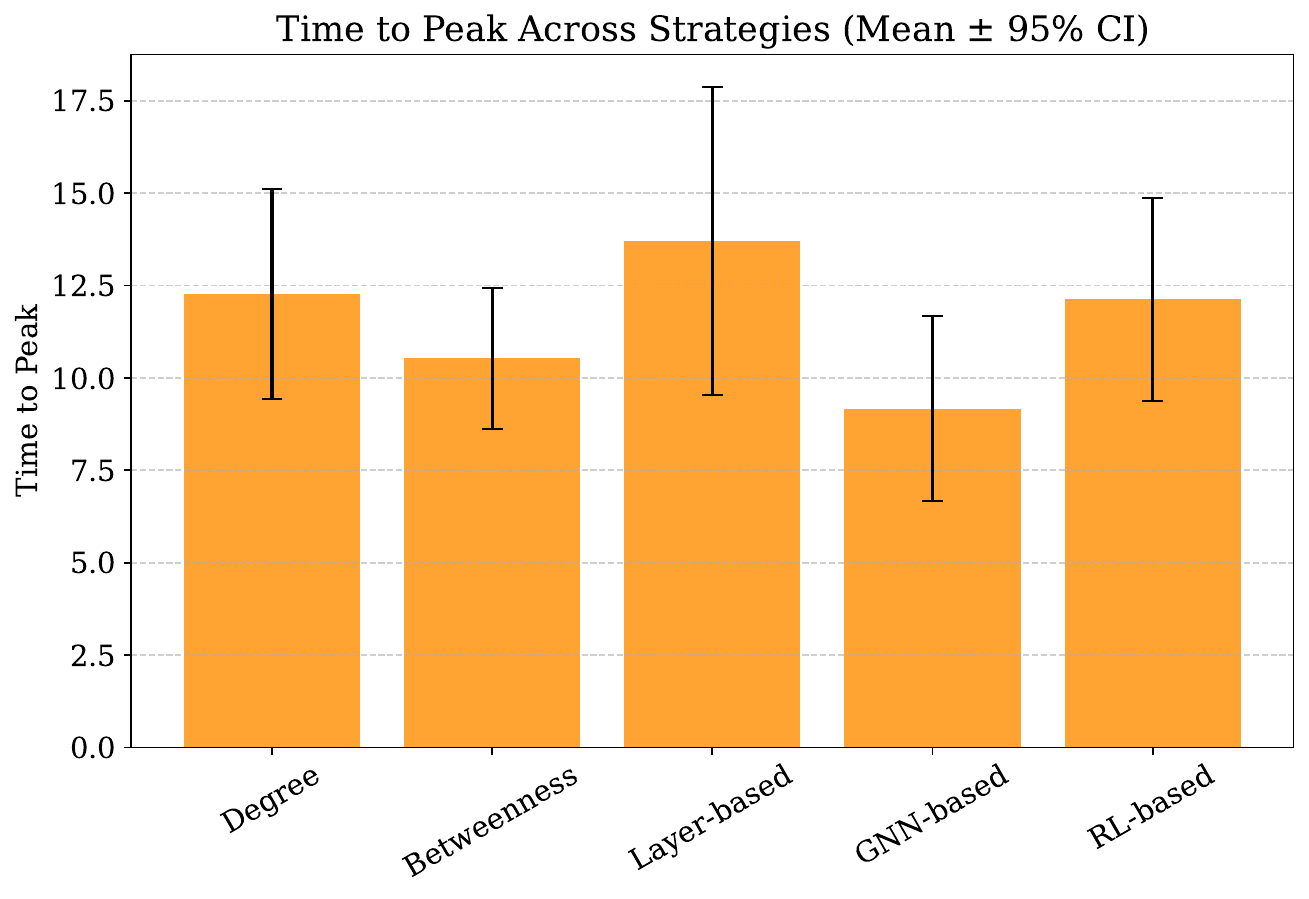}
\caption{The GNN-based strategy achieves the earliest time to peak infection, indicating the fastest epidemic suppression. 
Betweenness-based vaccination also reaches the peak earlier than degree- and RL-based strategies, reflecting its disruption of bridging nodes. Degree- and RL-based strategies exhibit similar times to peak, while the layer-based strategy peaks latest, consistent with its focus on community structure and slower cross-community spread.}
\label{fig:timepeak}
\end{figure}

\section{Conclusion}

This study investigated the effectiveness of classical and machine–learning-based vaccination strategies on a real-world contact network, using peak infection, final epidemic size, and time to peak infection as key epidemiological indicators. Across all metrics, the results demonstrate that classical heuristics—degree-based, betweenness-based, and layer-based vaccination—exhibit broadly similar performance. This clustering reflects the structural properties of the Email-Eu-Core network, whose dense connectivity and modest community structure reduce the discriminative power of traditional centrality measures. These findings are consistent with theoretical expectations that real-world networks often blur the distinctions predicted by idealized graph-theoretic models.

In contrast, the GNN-based vaccination strategy consistently outperformed all classical heuristics. It achieved the lowest peak infection, the smallest final epidemic size, and the earliest time to peak, indicating both stronger suppression of transmission pathways and faster epidemic decay. These improvements highlight the ability of graph neural networks to integrate multi-hop relational information and identify structurally critical nodes that are not captured by conventional centrality metrics. The GNN’s superior performance provides strong empirical support for the central claim of this work: machine learning can extend and refine classical network epidemiology by uncovering non-intuitive structural vulnerabilities.

The RL-based baseline, while currently untrained, performed as expected and establishes a lower bound for reinforcement learning approaches. Future work will incorporate fully trained RL agents capable of dynamic, temporally adaptive vaccination decisions, which may further enhance performance and complement the strengths of GNN-based strategies.

Overall, the results demonstrate that machine learning offers a powerful and flexible framework for epidemic control on complex networks. By outperforming classical heuristics across multiple epidemiological metrics, the GNN-based strategy illustrates the potential of data-driven approaches to inform public health interventions, especially in settings where network structure is heterogeneous, multi-layered, or only partially observable. These findings open the door to a new generation of vaccination strategies that combine theoretical guarantees with learned structural insights, bridging the gap between classical network science and modern machine learning.

\clearpage
\section*{Author contributions} MOO, BAA wrote the main manuscript text and prepared the main equations, figures, checked the writing errors in the manuscript, supervised the writing of the manuscript, and reviewed the manuscript.

\section*{Funding statement}
No funding was received for this research

\section*{Data availability}
The dataset on Email-Eu-Core contact network is available at
\url{https://github.com/arbenson/HigherOrderClustering.jl}.

\section*{Declarations conflict of interest} 
The authors declare no competing interests.

\bibliography{evo}

\end{document}